\PassOptionsToPackage{expansion=false}{microtype}
\documentclass[sigconf,nonacm]{acmart}

\AtBeginDocument{%
  \providecommand\BibTeX{{\normalfont B\kern-0.5em{\scshape i\kern-0.25em b}\kern-0.8em\TeX}}}
\setcopyright{none}
\acmConference[KDD '27]{Proceedings of the 33rd ACM SIGKDD Conference on
  Knowledge Discovery and Data Mining}{2027}{}
\acmISBN{978-x-xxxx-xxxx-x/27/xx}
\acmDOI{10.1145/nnnnnnn.nnnnnnn}
\usepackage{booktabs}
\usepackage{amsmath}
\usepackage{tikz}
\usetikzlibrary{arrows.meta,positioning,calc,fit}
\usepackage{multirow}

\begin{document}

\title{Beyond Participant-Level Cross-Validation: Reliable Inference
for Longitudinal Machine Learning}

\author{Shahran Rahman Alve}
\affiliation{%
  \institution{Department of Computer Science, The University of Texas at Dallas}
  \city{Richardson}
  \state{Texas}
  \country{USA}}
\email{shahranrahman.alve@utdallas.edu}

\renewcommand{\shortauthors}{SR Alve}

\begin{abstract}
Longitudinal sensing studies routinely collect thousands of windows from a few
dozen participants. The records are numerous; the independent scientific units
are not. When the outcome is defined per participant, this mismatch makes
apparently precise findings vulnerable to pseudo-replication, to partition
choice, and to the ordinary analytic flexibility of comparing several pipelines
before reporting one. Splitting on participants prevents a person's records from
straddling a split, but it does not calibrate the label-dependent workflow fold construction, preprocessing, tuning, calibration, and candidate selection that produced the reported number. We define a participant-level estimand and
obtain an analysis-matched null by permuting participant labels and rerunning
that entire workflow. In controlled simulation, window-level inference rejects in
70-80\% of replicates when no effect exists and a window bootstrap rejects at the
same rate; a participant bootstrap still rejects at 10-17\%; the analysis-matched
test holds 0.025-0.100 across cohorts of 20 to 80 participants. Freezing the
selected pipeline instead of repeating the search inflates Type-I error to 0.240
with eight candidates, where repeating it holds 0.040. Applied to two public
cohorts, wrist actigraphy ($n{=}55$) yields participant AUROC 0.928 with
$p = 0.0050$, a conclusion that persists under a scale-robust rank-pooled statistic and under a
matched permutation null computed after excluding hospitalised participants
($p = 0.0089$).
Smartphone sensing ($n{=}38$, 7 positives) yields 0.636 and does not reject
($p = 0.1724$) despite sufficient resolution, with sensitivity 0.143. The
practical rule is narrow: every step that reads labels belongs inside the permuted
analysis, and repeated records do not create additional independent participants.
\end{abstract}

\begin{CCSXML}
<ccs2012>
<concept>
<concept_id>10010147.10010257</concept_id>
<concept_desc>Computing methodologies~Machine learning</concept_desc>
<concept_significance>500</concept_significance>
</concept>
<concept>
<concept_id>10010405.10010444</concept_id>
<concept_desc>Applied computing~Health informatics</concept_desc>
<concept_significance>500</concept_significance>
</concept>
</ccs2012>
\end{CCSXML}
\ccsdesc[500]{Computing methodologies~Machine learning}
\ccsdesc[500]{Applied computing~Health informatics}

\keywords{longitudinal sensing, digital phenotyping, permutation testing,
evaluation methodology, small cohorts, reproducibility}

\maketitle
\pagestyle{empty}
\thispagestyle{empty}

\section{Introduction}
Passive sensing and digital phenotyping aim to infer participant-level health
outcomes from repeated measurements such as actigraphy, sleep, mobility, and
phone interaction. A recent cross-condition scoping review of 65 smartphone
sensing studies reports a median cohort of 52 participants
(IQR 26--126)~\cite{dumas2026scoping}, while individual studies may contain
thousands of windows. This mismatch between the number of records and the number
of independent observational units creates a difficult evaluation problem.
Participant-level splitting prevents one person's windows from appearing on both
sides of a split, but it does not by itself make a high point estimate stable or
informative.

The problem is amplified by ordinary analytical flexibility. A study may compare
several feature sets, model classes, calibration procedures, and partitions
before reporting one number. In a small cohort, selecting the most favourable
pipeline can produce a large AUROC even when labels are unrelated to features.
A chance reference of 0.5 does not describe the distribution induced by the
complete evaluation design, and an interval computed around one fitted pipeline
need not account for the selection that produced it.

Permutation testing addresses this directly. Under participant-level
exchangeability, labels may be reassigned across participants while each
participant's full longitudinal feature history is preserved. The resulting
distribution is interpretable only when every data-dependent step used for the
reported result is repeated under permutation. We therefore treat preprocessing,
tuning, calibration, and model selection as part of the test statistic rather
than as fixed preliminaries.

Our contributions are:
\begin{enumerate}\itemsep1pt \parskip0pt
\item An analysis-matched participant-level permutation test for grouped
      longitudinal machine learning that treats the complete prespecified analysis --- preprocessing,
      tuning, calibration, label-dependent fold construction, and optionally
      candidate selection --- as the test statistic (Section~\ref{sec:method}).
\item Two simulations with known ground truth: one comparing Type-I error and
      power against the window-level and bootstrap procedures a practitioner
      would otherwise use, and one isolating what happens when the selection step
      is left outside the null (Section~\ref{sec:sim}).
\item A scientific application to two public sensing cohorts whose conclusions
      diverge under the same analysis, with a confound-restricted matched null for
      the inpatient-excluded cohort, repeated partition schedules, and a
      scale-robust sensitivity analysis.
\end{enumerate}

\section{Related Work}

\textbf{Splitting, leakage, and identity confounding.} The closest prior work
comes from digital health rather than machine learning venues.
\citet{saeb2017needtoconsider} showed that assigning a subject's repeated records
to both training and test sets can drastically underestimate error, because the
classifier learns to recognise people rather than the condition.
\citet{chaibubneto2019identity} turned that into a test: a permutation procedure
quantifying how much identity confounding a record-wise classifier has absorbed.
Our work begins where theirs ends. They ask whether a \emph{record-wise} design is
contaminated by identity; we assume the subject-wise design they recommend and ask
whether such a result survives a null that also absorbs the analyst's tuning and
selection. The tests are complementary and target different failure modes.

Leakage of many kinds drives irreproducibility across fields
\citep{kapoor2023leakage}, with cases in brain MRI \citep{yagis2021leakage},
digital pathology \citep{bussola2021leakage}, COVID-19 imaging
\citep{roberts2021commonpitfalls}, and medical imaging broadly
\citep{varoquaux2022medicalimaging}; \citet{brown2023shortcut} show models routing
around the intended signal. In each case the defect is invisible in the headline
metric and appears only under a designed check.

\textbf{Small samples and overoptimism.} \citet{varoquaux2018crossvalidation}
quantifies how wide cross-validation error bars become in small samples, and
\citet{nadeau2003inference} explain why naive variance estimates for
cross-validated error are anticonservative when folds share training data.
\citet{berisha2022overoptimistic} find an inverse relationship between sample size
and reported accuracy in clinical speech ML, the signature of a literature
reporting the tail of a noisy distribution;
\citet{goetz2024generalization,jeong2025personalized,mehta2026gmlp} document
related generalisation and practice gaps.

\textbf{Cohort sizes in sensing-based health ML.} A 2026 scoping review of 65
smartphone digital-phenotyping studies reports a median cohort of 52
participants, IQR 26--126 \citep{dumas2026scoping}; a review of passive sensing in
psychosis finds heterogeneous methods and limited external validation
\citep{bladon2025passive}, and recent deployments remain in the same range
\citep{kim2026korean}. Surveys of wearable clinical AI note the need for
validation standards \citep{advmat2025wearablesensors,mahajan2025wearableai}, and
\citet{natcomm2026frailty} demonstrate wearable inference at clinical scale. The
resulting mismatch is stark: tens of independent people, thousands of records, and
metrics usually reported over the records.

\textbf{Permutation inference.} Permutation testing of classifier performance is
long established. \citet{ojala2010permutation} set out the null hypotheses and
test statistics, \citet{phipson2010permutation} show why a Monte Carlo permutation
$p$-value needs the $+1$ correction that Eq.~(2) uses, and
\citet{conroy2012fast} combine model selection with permutation testing for
$\ell_2$-regularized logistic regression. Permutation testing of classifier performance is well established. Our
contribution is an analysis-matched formulation for grouped longitudinal machine
learning: a participant-level estimand, a null in which label-dependent fold
construction is regenerated rather than reused, end-to-end repetition of pipeline
selection, and an empirical characterisation of what follows when these steps are
omitted.

\textbf{Datasets.} StudentLife \citep{wang2014studentlife} instrumented one
48-student class over a ten-week term with pre- and post-term instruments
including the PHQ-9 \citep{kroenke2001phq9}; Depresjon
\citep{garciaceja2018depresjon} released one-minute wrist actigraphy for 23
patients and 32 controls.

\textbf{Privacy.} DP-SGD \citep{abadi2016dpsgd} usually accounts at example level,
which under a record-per-window design does not protect a participant. Membership
inference \citep{shokri2017membership} is the empirical counterpart, and leakage
persists at large $\varepsilon$ \citep{bertoli2026nist}.

\begin{figure*}[t]
\centering
\begin{tikzpicture}[
  font=\footnotesize,
  node distance=3.2mm,
  box/.style={draw, rounded corners=1pt, minimum height=6mm, align=center,
              inner xsep=3pt, inner ysep=2.5pt},
  acc/.style={box, draw=accent, thick},
  arr/.style={-{Latex[length=1.5mm]}, semithick},
  lab/.style={font=\scriptsize\itshape, text=black!55}
]
\definecolor{accent}{RGB}{0,114,178}

\node[lab, anchor=west] (r1lab) at (0,0) {Analysis algorithm $A$ (rerun in full for every labelling)};
\node[box, below=2.5mm of r1lab.west, anchor=north west] (data)
  {participant\\windows $x_{ij}$};
\node[box, right=of data] (folds) {grouped-stratified\\folds from labels};
\node[box, right=of folds] (prep) {preprocess\\+ impute (train only)};
\node[box, right=of prep] (inner) {inner validation:\\select 1 of 8 candidates};
\node[box, right=of inner] (outer) {outer fold:\\window probs $p_{ij}$};
\node[acc, right=of outer] (score) {participant score\\$s_i=\overline{p_{ij}}$};
\node[acc, right=of score] (stat) {$T$: participant\\AUROC};

\foreach \a/\b in {data/folds, folds/prep, prep/inner, inner/outer, outer/score, score/stat}
  \draw[arr] (\a) -- (\b);

\draw[arr, densely dashed] (inner.south) ++(0,-1mm) -- ++(0,-4mm) -| ($(prep.south)+(0,-5mm)$)
  -- (prep.south) node[lab, midway, below, yshift=-3.5mm] {refit per candidate};

\node[lab, anchor=north west] (r2lab) at ($(data.south west)+(0,-13mm)$)
  {Inference};
\node[box, below=2.5mm of r2lab.west, anchor=north west] (perm)
  {permute participant\\labels $\pi_b(y)$};
\node[box, right=of perm] (rerun) {rerun $A$ end to end\\(no fitted object reused)};
\node[box, right=of rerun, minimum width=26mm] (fixnull)
  {fixed-pipeline null\\\scriptsize one pre-specified candidate};
\node[acc, below=2.2mm of fixnull, minimum width=26mm] (selnull)
  {selection-aware null\\\scriptsize selection repeated each time};
\node[acc, right=13mm of fixnull, yshift=-4.5mm] (pval)
  {$p=\dfrac{1+\#\{T_b\ge T_{\mathrm{obs}}\}}{B+1}$};

\draw[arr] (perm) -- (rerun);
\draw[arr] (rerun.east) -- ++(3mm,0) |- (fixnull.west);
\draw[arr] (rerun.east) -- ++(3mm,0) |- (selnull.west);
\draw[arr] (fixnull.east) -- ++(3mm,0) |- (pval.west);
\draw[arr] (selnull.east) -- ++(3mm,0) |- (pval.west);
\draw[arr] (stat.south) -- ++(0,-3mm) -| ($(pval.north)+(0,2mm)$) -- (pval.north);
\draw[arr, densely dashed] (perm.north) -- ++(0,4mm) -| (folds.south)
  node[lab, pos=0.25, above] {labels enter fold construction};
\end{tikzpicture}
\caption{The analysis algorithm $A$ (top) maps participant windows to a single
participant-level statistic, with every data-dependent step---label-dependent fold
construction, preprocessing, and candidate selection by inner validation---fitted
inside training participants. Inference (bottom) permutes participant labels and
reruns $A$ from the beginning. The two branches differ only in what $A$ contains:
the \emph{fixed-pipeline} null evaluates one pre-specified candidate, whereas the
\emph{selection-aware} null repeats the candidate search under every permutation.
The gap between them is the optimism that a fixed-pipeline null does not capture.
Because folds are grouped-stratified, labels enter the design itself (dashed
arrow), so folds are regenerated from each permuted labelling.}
\label{fig:arch}
\end{figure*}

\section{Method}\label{sec:method}

\subsection{Participant-level estimand}
Dataset $d$ contains participants $i = 1,\dots,n_d$ with outcome $y_i$ and
eligible windows $x_{ij}$, $j = 1,\dots,m_i$; all splits are over participants.
Within an outer training set, imputation, scaling, feature handling,
hyperparameter selection, calibration, and threshold selection are fitted without
access to held-out participants, and the fitted pipeline produces window
probabilities $p_{ij}$. The primary participant score is
\begin{equation}
s_i \;=\; m_i^{-1} \textstyle\sum_{j=1}^{m_i} p_{ij},
\end{equation}
with mean aggregation fixed before evaluation. All primary metrics are computed
from the $n_d$ pairs $(y_i, s_i)$, so no participant is weighted more heavily for
contributing more windows. Median and most-recent-window aggregation are
sensitivity analyses only.

\subsection{Selection-aware permutation test}
Let $A$ denote the complete analysis mapping a labelled participant dataset to a
pre-specified statistic $T$, including all data-dependent preprocessing, tuning,
calibration, and selection among candidate pipelines. We compute
$T_{\text{obs}} = A(X, y)$. For each of $B$ participant-level permutations
$\pi_b$, all windows of participant $i$ receive the common reassigned label
$y_{\pi_b(i)}$, and we recompute $T_b = A(X, \pi_b(y))$ from the beginning. No
fitted preprocessor, hyperparameter, calibrator, or selected model is carried over
from the observed labelling. The one-sided Monte Carlo $p$-value is
\begin{equation}
p \;=\; \frac{1 + \sum_{b=1}^{B} \mathbf{1}\!\left[T_b \ge T_{\text{obs}}\right]}{B + 1},
\end{equation}
reported with a binomial confidence interval for finite-permutation uncertainty.
Permutation quantiles are used only for visualisation; inference uses the
$p$-value. Across primary datasets, $p$-values are adjusted by Holm's method.

\subsection{Validity}
\begin{proposition}
Let $A$ be a fully specified, possibly randomized analysis mapping a labelled
participant dataset to a scalar $T$. Under the null that participant labels are
exchangeable conditional on the participant feature histories, the permutation
test that recomputes $A$ under every participant-label permutation controls the
Type-I error at level $\alpha$, provided $A$ and its randomization policy are
applied identically to the observed and to every permuted dataset.
\end{proposition}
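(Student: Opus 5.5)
The plan is the standard Monte Carlo permutation-test argument (as in \citet{phipson2010permutation}), adapted to a statistic that is itself a randomized, label-dependent algorithm. I will condition throughout on the feature histories $X=(x_{ij})$. Under the null, the label vector $y$ is exchangeable given $X$: $\pi(y)\overset{d}{=}y$ conditionally on $X$ for every permutation $\pi$ of $\{1,\dots,n_d\}$. I write the randomized analysis as $A(X,y;U)$, where $U$ collects all internal randomness: fold shuffling, seeds, and tie-breaking in candidate selection. The randomization policy must be applied identically to every dataset. I interpret this as requiring that $U$ be drawn independently of $y$ from a fixed distribution, freshly (or identically distributed) for the observed run and for each permuted run. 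Fold construction, preprocessing, tuning, calibration and selection are all deterministic functions of $(X,\text{labels},U)$. So $T=A(X,y;U)$ is a measurable function of the labels and independent auxiliary noise, with no step that reads the observed labels when processing a permuted labelling.

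\textbf{Step 1 (exchangeability of the statistics).} Let $\pi_1,\dots,\pi_B$ be i.i.d.\ uniform random permutations, independent of $y$, and set $\pi_0=\mathrm{id}$. Let $U_0,\dots,U_B$ be i.i.d.\ copies of the randomization, independent of everything else. Define $T_b=A(X,\pi_b(y);U_b)$. I will show that $(T_0,T_1,\dots,T_B)$ is exchangeable conditionally on $X$.

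The route is to introduce a uniform permutation $\sigma$ independent of everything and write $z=\sigma(y)$. Under the null, $z\overset{d}{=}y$ given $X$. The relabellings $\pi_b\circ\sigma$ are then $B+1$ i.i.d.\ uniform permutations applied to a common $z$. This is the usual group-coset argument.

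In practice $\pi_0=\mathrm{id}$ is not random. To handle this, I apply the standard trick: the joint law of $(y,\pi_1(y),\dots,\pi_B(y))$ is invariant under swapping coordinate $0$ with coordinate $k$. This holds because $(\pi_k^{-1},\ \pi_b\pi_k^{-1})$ for $b\neq k$ are again i.i.d.\ uniform and independent, and the null gives $\pi_k(y)\overset{d}{=}y$. Since each $T_b$ is the same function applied to coordinate $b$ plus its own i.i.d.\ $U_b$, exchangeability transfers from the labellings to the $T_b$.

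\textbf{Step 2 (rank argument).} Given exchangeability, the rank of $T_0=T_{\mathrm{obs}}$ among $T_0,\dots,T_B$ is uniform, with ties broken in favour of the larger count. Hence
\[
\Pr\bigl(1+\#\{b\ge1:T_b\ge T_0\}\le k\bigr)\le \frac{k}{B+1}
\quad\text{for every integer } k .
\]
This gives $\Pr(p\le\alpha)\le\alpha$ for all $\alpha\in(0,1)$. Integrating over $X$ turns the conditional statement into an unconditional one.

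\textbf{Main obstacle.} The hard step is Step 1's claim that $A$ depends on the labels only through its input argument. I need to argue carefully that label-dependent fold construction is regenerated from $\pi_b(y)$ and not reused from $y$. Likewise, selection among candidates must be repeated inside $A$, and no fitted object may be carried across runs. Otherwise $T_b$ is a function of both $y$ and $\pi_b(y)$, the coordinate-swap symmetry fails, and exchangeability breaks. The frozen-pipeline case in the simulations is exactly this failure.

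I will state precisely that these are the conditions the proposition's phrase ``applied identically'' encodes. I will also note two limits of the result. First, validity is finite-sample and exact for any $B$, with no asymptotics or assumptions on $A$'s quality. Second, it concerns only the participant-level exchangeability null; within-participant dependence is preserved because whole histories move together.
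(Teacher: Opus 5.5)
Your proof is correct and takes the same route as the paper's sketch. Null exchangeability of the labellings given $X$ transfers to $(T_0,\dots,T_B)$ because every label-reading step, including stratified fold construction and candidate selection, sits inside $A$, and the rank argument then makes the $+1$-corrected $p$-value valid.

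The one difference is in how randomness is handled. The paper conditions on a single deterministic seed schedule reused for every labelling, as its implementation does, so $A(X,\cdot\,;U)$ is a fixed function of the labels; you instead draw i.i.d.\ $U_b$ per run. Your argument covers the shared-seed case by conditioning on $U$, and your coordinate-swap treatment of randomly sampled permutations with $\pi_0=\mathrm{id}$ spells out what the sketch compresses into ``conservative under Monte Carlo sampling.''
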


\noindent\emph{Proof sketch.} Conditional on the feature histories and the seed
schedule, exchangeability makes the observed labelling and its permutations
equally likely; applying the same $A$ to each leaves the multiset of statistics
permutation invariant, so the rank of $T_{\text{obs}}$ is uniform up to ties and
the $+1$-corrected $p$-value is conservative under Monte Carlo sampling.
Data-dependent preprocessing, tuning, calibration, and selection do not invalidate
the test because they sit inside $A$.\hfill$\square$

\smallskip
\noindent \textbf{Exchangeability in these cohorts.} Depresjon was released as a
clinical case series with unmatched controls and no site, family, or pairing
structure documented in the release, so we use unrestricted participant-level
permutation under the exchangeability assumption for that cohort structure. One
observed design feature that may challenge exchangeability is hospitalisation
status, which we therefore examine in a separately permuted restricted analysis
(\S\ref{sec:confound}). StudentLife participants were observed over the same
academic term, so calendar time is shared at the cohort level rather than defining
known participant-specific permutation blocks. Designs with known blocks --- sites,
families, matched pairs, or treatment strata --- require restricted permutation
within those blocks, which is why the recording-length-matched analysis would
require permutation within its matched strata.

\noindent The assumptions are substantive. Labels must be exchangeable across
participants; site, family, or temporal block structure would require permutation
within blocks. $A$ must be frozen before the final test, and the procedure
calibrates that frozen $A$: it cannot retrospectively correct for choices made
while developing $A$ after seeing earlier results, for which only pre-registration
or a held-out cohort suffices. A chance-centred null does not certify that
features are free of label proxies, and the test concerns association under the
design, not causality or clinical validity.

\begin{figure*}[t]
\centering
\includegraphics[width=0.78\textwidth]{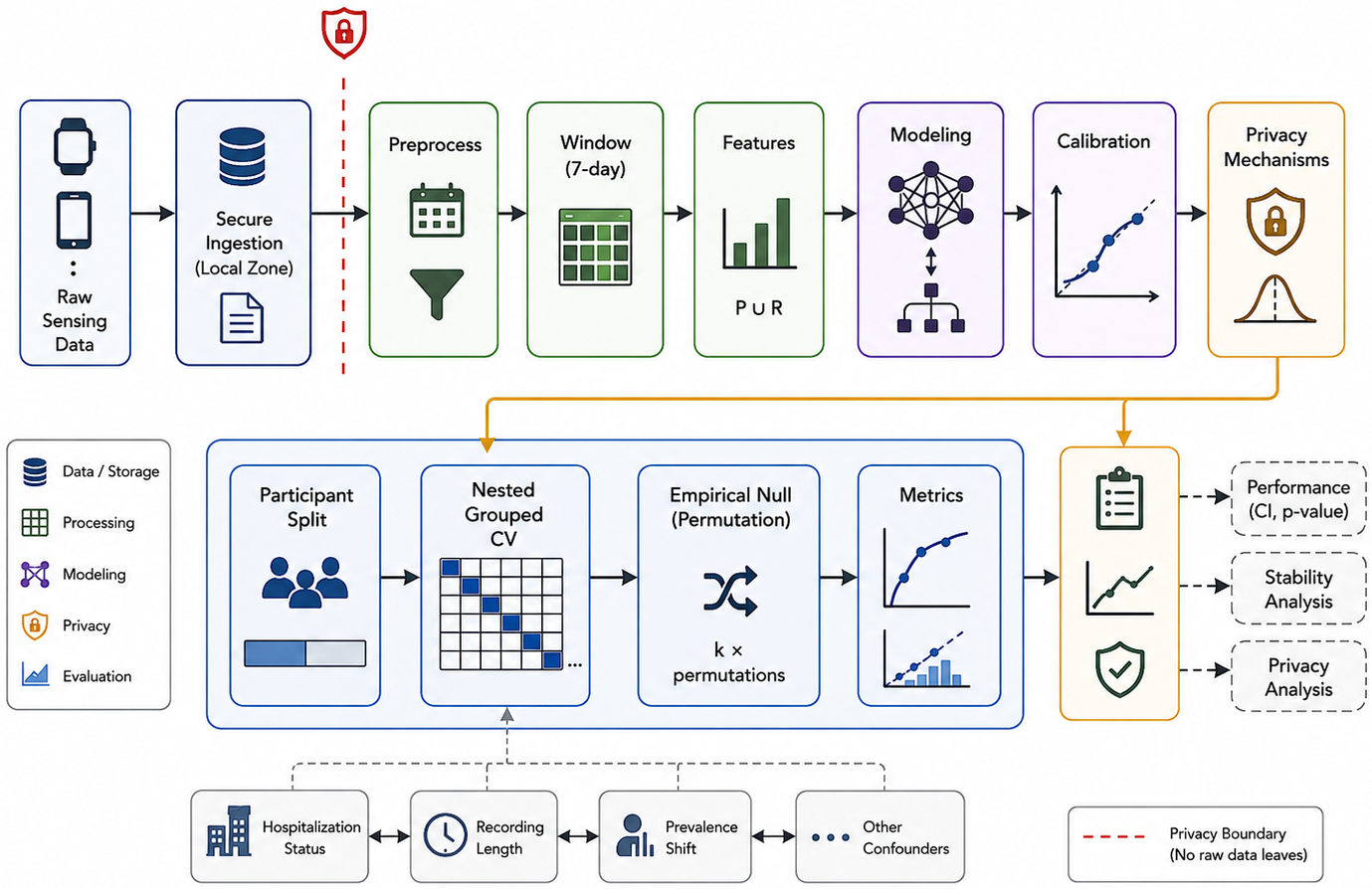}
\caption{The analysis sits inside a larger system: raw records, coordinates, and
exact timestamps do not cross the privacy boundary (dashed). Confound analyses
re-enter the cross-validation loop, each requiring its own matched null.
Figure~\ref{fig:arch} remains authoritative for the inference procedure.}
\label{fig:overview}
\end{figure*}

\section{Experimental Setup}\label{sec:setup}
Depresjon yields 55 participants (23 positive) and 788 windows; StudentLife under
a post-term PHQ-9 cutoff of 10~\cite{kroenke2001phq9} yields 38 participants with
a usable label, of whom 7 are positive, and 1{,}938 windows. Windows are seven-day
histories built only from days at or before each cutoff.

Outer evaluation uses five grouped folds; every participant is held out exactly
once, with complete out-of-fold coverage in both datasets. Two statistics are
reported, differing only in what $A$ contains. The
\emph{selection-aware} statistic is the participant-level pooled
out-of-fold AUROC obtained when, within each outer training fold, the candidate is
chosen by inner-validation participant AUROC among eight pre-specified pipelines:
$\ell_2$ logistic regression at two regularisation strengths on the combined
feature set and at one strength on each of the population-only and
participant-relative branches, two random forests differing in minimum leaf size,
and two histogram gradient-boosting configurations differing in learning rate and
leaf count. Ties break by candidate name. The \emph{fixed-pipeline} statistic uses only the
pre-specified gradient-boosting configuration and is an ablation showing what
ignoring selection misses. The candidate set was frozen before the final analysis
and is identical under every permutation and partition seed.

For each permuted labelling the grouped-stratified outer and inner folds are
regenerated from the permuted labels using the same deterministic seed schedule
and splitting algorithm; no fold assignment constructed from the observed labels
is reused, because stratification reads the labels and is therefore part of $A$.

Confidence intervals are percentile intervals from a stratified participant
bootstrap with 10{,}000 replicates, resampling positives and negatives separately
so every replicate contains both classes. Thresholds are chosen on inner
validation participants only; with seven positive StudentLife participants, all
thresholded metrics there are highly uncertain.

\section{Simulation Study}\label{sec:sim}
Everything else here measures a real cohort where the truth is unknown. To check
that the procedure does what it claims, we need data whose answer we already know.

We simulate participants contributing 8--30 correlated windows over 12 features.
A participant random intercept (SD 1.0) is deliberately large relative to window
noise (SD 0.6), reproducing the property that causes trouble in practice: two
windows from one person resemble each other far more than either resembles anyone
else's. A latent effect $\delta$ shifts four of the twelve features for positive
participants, so $\delta = 0$ is the null. Prevalence is 42\%, roughly matching the
datasets in Section~\ref{sec:setup}.

Four procedures see identical data. \textbf{Window-naive} tests window-level AUROC
against 0.5 with a Mann--Whitney test treating every window as independent.
\textbf{Window bootstrap} resamples windows and rejects when the 95\% interval
excludes 0.5. \textbf{Participant bootstrap} aggregates to one score per
participant and resamples participants. \textbf{Permutation} is the test of
Section~\ref{sec:method} at 99 permutations per replicate. All four share the same
nested grouped cross-validation and regularized logistic classifier; only the
inference differs. A linear learner keeps $10^{4}$ fits affordable, and the
question is whether the \emph{test} controls error, not which classifier wins.

\begin{table*}[t]
\centering\small
\caption{Rejection rates at a nominal 5\% level, 40--60 simulated replicates per
cell. At $\delta = 0$ the rate is Type-I error and should sit near 0.05; above zero
it is power. All four procedures see identical data, identical grouped folds, and
the same classifier; only the inference differs. Brackets are 95\%
Clopper--Pearson intervals.}
\label{tab:sim}
\begin{tabular}{lccccccccc}
\toprule
& \multicolumn{3}{c}{$n = 20$ participants} & \multicolumn{3}{c}{$n = 40$}
& \multicolumn{3}{c}{$n = 80$} \\
\cmidrule(lr){2-4}\cmidrule(lr){5-7}\cmidrule(lr){8-10}
Procedure & $\delta{=}0$ & $\delta{=}0.3$ & $\delta{=}0.6$
          & $\delta{=}0$ & $\delta{=}0.3$ & $\delta{=}0.6$
          & $\delta{=}0$ & $\delta{=}0.3$ & $\delta{=}0.6$ \\
\midrule
Window-naive          & 0.700 & 0.825 & 0.775 & 0.767 & 0.833 & 0.967 & 0.800 & 0.775 & 0.975 \\
Window bootstrap      & 0.675 & 0.825 & 0.775 & 0.767 & 0.817 & 0.967 & 0.800 & 0.775 & 0.975 \\
Participant bootstrap & 0.150 & 0.125 & 0.375 & 0.167 & 0.233 & 0.517 & 0.100 & 0.150 & 0.950 \\
\textbf{Permutation}  & \textbf{0.100} & 0.100 & 0.250
                      & \textbf{0.033} & 0.133 & 0.517
                      & \textbf{0.025} & 0.175 & 0.925 \\
 & \tiny[.03,.24] & & & \tiny[.00,.12] & & & \tiny[.00,.13] & & \\
\midrule
Mean participant AUROC & 0.534 & 0.557 & 0.646 & 0.483 & 0.550 & 0.685 & 0.477 & 0.554 & 0.736 \\
\bottomrule
\end{tabular}
\end{table*}

Table~\ref{tab:sim} is the central result of this paper. Under the null, the
window-naive procedure rejects in 70--80\% of replicates at a nominal 5\% level,
and the rate does not improve with cohort size, because adding participants does
not fix an analysis that counts windows. Although window bootstrapping accounts for
sampling variability, it rejects at essentially the same rate: resampling windows
does not restore independence when the participant is the observational unit. Aggregating to participants helps substantially but does not suffice;
the participant bootstrap still rejects at 10--17\%, two to three times nominal,
because a bootstrap interval around a cross-validated statistic does not account
for the fitting that produced it. The permutation test rejects at 0.025--0.100
across the three cohort sizes, with intervals covering the nominal level
throughout.

The natural objection to a conservative test is that it trades power for error
control. That trade is real but small, and it shrinks as the cohort grows. At
$n = 40$ and $\delta = 0.6$ the permutation test and the participant bootstrap
reject at exactly the same rate (0.517), while only one of them controls Type-I
error. At $n = 80$ the gap is 0.925 against 0.950. At $n = 20$ the permutation test is
meaningfully less powerful (0.250 against 0.375), reflecting the power cost of
maintaining Type-I error control in the smallest cohorts; the participant
bootstrap's higher apparent power there should be interpreted alongside its
elevated Type-I error of 0.150 under the null.

Two further points. Mean participant AUROC under the null sits at 0.477--0.534, so
the point estimate gives no warning; the inflation lives entirely in the
inference. And at $n = 20$ every procedure struggles, ours included, whose Type-I
rate of 0.100 carries an interval of $[0.03, 0.24]$. Cell sizes of 40--60
replicates make these intervals wide enough that we describe the ordering rather
than exact values, and the simulation uses a linear learner and Gaussian
covariance, so it establishes error control under a plausible grouped design
rather than every design.

\subsection{Does the selection step have to be inside the null?}
Rerunning candidate selection under every permutation costs more than permuting
after a model has been chosen; Table~\ref{tab:selnull} asks whether that buys
anything. Under the null we vary the number of candidates $K$ and compare two
reference distributions for the same observed statistic: the \emph{fixed-pipeline} variant, which refits only the winning pipeline, and
the \emph{selection-aware} variant, which reruns the search each time.

\begin{table}[t]
\centering\small
\caption{Selection inside the null, at $\delta = 0$ (no effect present), 40
participants, 25--30 replicates per row. The \emph{fixed-pipeline} variant permutes labels after the
winning pipeline has been chosen; the \emph{selection-aware} variant reruns the
candidate search inside every permutation. At $K = 1$ there is nothing to select and the two
coincide, as they must. As the candidate set grows the fixed-pipeline
variant over-rejects while the selection-aware variant stays near nominal.}
\label{tab:selnull}
\begin{tabular}{lccc}
\toprule
Candidates $K$ & 1 & 4 & 8 \\
\midrule
Type-I, fixed-pipeline & 0.067 & 0.100 & \textbf{0.240} \\
Type-I, selection-aware& 0.067 & 0.033 & \textbf{0.040} \\
\midrule
Null mean, fixed       & 0.487 & 0.484 & 0.479 \\
Null mean, sel.-aware  & 0.487 & 0.556 & 0.557 \\
Null 95th, fixed       & 0.677 & 0.667 & 0.667 \\
Null 95th, sel.-aware  & 0.677 & 0.715 & 0.719 \\
Upper-tail shift       & 0.000 & 0.048 & 0.052 \\
\bottomrule
\end{tabular}
\end{table}

At $K = 1$ there is nothing to select and the two nulls coincide exactly, the
sanity check the design requires. From there they separate: with four candidates
the fixed-pipeline variant rejects at 0.100 and with eight at 0.240, while the
selection-aware variant holds 0.033 and 0.040. The lower rows show the mechanism --- repeating the search
lifts the null's 95th percentile by roughly 0.05, so a study comparing against the
fixed-pipeline distribution uses a reference that is too low by about that much. Permuting
labels after choosing a model tests whether that pipeline beats chance, which is
not the question asked when the reported number came from a search.

\section{Results}\label{sec:results}

\subsection{Participant-level performance}
Table~\ref{tab:main} reports the primary results. The participant-level estimand
changes what is reported: on Depresjon the window-pooled estimate was 0.830 against
0.928 at participant level, while on StudentLife an AUPRC of 0.376 against a 0.184
prevalence baseline, with sensitivity 0.143, shows what a moderate AUROC conceals
at 18\% prevalence.

\begin{table}[t]
\centering\small
\caption{Participant-level out-of-fold performance. Intervals are stratified
participant bootstrap percentile intervals, resampling positives and negatives
separately so every replicate contains both classes. Thresholded metrics use a
threshold selected on inner validation participants only; the intervals do not
include threshold-selection variability, and the StudentLife threshold rests on
seven positive participants.}
\label{tab:main}
\begin{tabular}{lcc}
\toprule
 & Depresjon & StudentLife \\
\midrule
Participants (pos / neg)      & 55 (23/32) & 38 (7/31) \\
AUROC                         & 0.928 & 0.636 \\
\quad 95\% CI                 & [0.851, 0.980] & [0.388, 0.858] \\
AUPRC (prevalence)            & 0.918 (0.418) & 0.376 (0.184) \\
\quad 95\% CI                 & [0.822, 0.978] & [0.122, 0.692] \\
Brier score                   & 0.115 & 0.163 \\
Log loss                      & 0.385 & 0.815 \\
Balanced accuracy             & 0.791 & 0.539 \\
\quad 95\% CI                 & [0.676, 0.894] & [0.436, 0.698] \\
Sensitivity                   & 0.739 & 0.143 \\
\quad 95\% CI                 & [0.565, 0.913] & [0.000, 0.429] \\
Specificity                   & 0.844 & 0.936 \\
\quad 95\% CI                 & [0.719, 0.969] & [0.839, 1.000] \\
PPV                           & 0.773 & 0.333 \\
\quad 95\% CI                 & [0.625, 0.933] & [0.000, 1.000] \\
Positives detected            & 17 of 23 & \textbf{1 of 7} \\
\bottomrule
\end{tabular}
\end{table}

\subsection{Permutation tests}
Table~\ref{tab:null} reports both nulls. On Depresjon no permuted statistic
reached the observed value across $B = 201$ fixed-pipeline permutations, giving
$p = 0.0050$ with a Monte Carlo interval of $[0.005, 0.023]$. On StudentLife 9 of
57 permuted statistics matched or exceeded the observed value, giving $p = 0.1724$
$([0.086, 0.294])$; the exchangeability null is not rejected. Because the
StudentLife smallest attainable $p$-value is 0.017, that test had the resolution
to reject and did not, which makes this a non-rejection rather than an
inconclusive test.

The Depresjon null mean of 0.4993 is what an exchangeability null should produce.
An earlier version reported 0.471, and the discrepancy was diagnostic.
Grouped-\emph{stratified} fold construction reads the labels and therefore belongs
inside $A$; holding folds fixed while permuting leaves every permuted fit
stratified against a labelling that no longer exists. Regenerating folds from each
permuted vector moved the null mean from 0.471 to 0.499 and the maximum from 0.821
to 0.709. All results use the regenerated design. Centring provides no evidence of label leakage detectable by this diagnostic; it
does not certify its absence.

\emph{Exploratory.} The lower block of Table~\ref{tab:null} compares the fixed-pipeline and
selection-aware statistics on the \emph{same} permutation identifiers. Repeating
selection raises the null mean by 0.044 (95\% bootstrap CI $[0.022, 0.068]$),
positive in 78\% of permutations (sign test $p=0.011$), and moves the 95th
percentile from 0.676 to 0.715. At matched $B=23$, however, both tests return
$p=0.042$: the shift moved the reference distribution without changing the
inference, because the observed statistic lies far above both nulls. Comparing $p$-values obtained at different $B$ would confound a selection effect
with Monte Carlo resolution, so the contrast above is computed on matched
permutations only. Nor is a null-mean shift itself evidence of optimism:
inner selection within outer training folds is proper nested evaluation. The
quantity that would change inference is the upper tail, estimated too imprecisely
at $B=23$, and we have not excluded that outer folds selecting different model
families introduce a score-scale artefact when pooled. We report the shift as
requiring confirmation at $B\ge999$.

\begin{table}[t]
\centering\small
\caption{Participant-label permutation tests. Upper block: fixed-pipeline test
per dataset. Lower block: paired fixed-versus-selection-aware comparison on
Depresjon computed on the \emph{same} permutation identifiers, so the contrast is
free of the Monte Carlo-resolution artefact that comparing $p$-values at different
$B$ would introduce. Intervals for $p$ are Clopper--Pearson for the unknown tail
probability.}
\label{tab:null}
\begin{tabular}{lcc}
\toprule
 & Depresjon & StudentLife \\
\midrule
$T_{\text{obs}}$              & 0.928 & 0.636 \\
$B$                            & 201 & 57 \\
Null mean (SD)                 & 0.490 (0.105) & 0.482 (0.150) \\
Null max                       & 0.769 & 0.908 \\
$\#\{T_b \ge T_{\text{obs}}\}$ & 0 & 9 \\
Monte Carlo $p$                & \textbf{0.0050} & 0.1724 \\
\quad 95\% MC interval         & [0.005, 0.023] & [0.086, 0.294] \\
Smallest attainable $p$        & 0.0050 & 0.017 \\
Rank-pooled $p$ (\S\ref{sec:robust})  & 0.0090 & --- \\
\midrule
\multicolumn{3}{l}{\emph{Paired, Depresjon, matched $B=23$}} \\
 & Fixed & Selection-aware \\
Null mean                      & 0.502 & 0.564 \\
Null 95th / 99th pct           & 0.676 / 0.719 & 0.715 / 0.733 \\
Exceedances                    & 0 & 0 \\
Matched $p$                    & 0.042 & 0.042 \\
$\Delta_b$ mean [95\% CI]      & \multicolumn{2}{c}{0.044 [0.022, 0.068]} \\
$\Pr(\Delta_b>0)$              & \multicolumn{2}{c}{0.78 ($p=0.011$)} \\
\bottomrule
\end{tabular}
\end{table}

\subsection{Partition stability}
We regenerated the grouped-stratified five-fold design under independent seeds and
reran the complete procedure for each (Figure~\ref{fig:stability}). On Depresjon
the fixed pipeline has median 0.909 across 23 seeds, 5th--95th percentile
$[0.857, 0.936]$, range 0.095: across the evaluated seeds the estimate varied
little. On StudentLife it has median 0.712 across 6 seeds, $[0.624, 0.803]$, range 0.194
--- roughly twice as wide across the seeds evaluated, spanning values a study would
present very differently. Larger sweeps are needed to estimate partition-induced
variability precisely. Across the six evaluated StudentLife seeds the selection-aware median was about
0.04 above the fixed-pipeline median (0.753 against 0.712), while on Depresjon the
two medians coincide at 0.909--0.910; three different candidates win across the
six StudentLife seeds. This comparison is exploratory and too small to
characterise the selection effect precisely. Seed counts are below the
50--100 we would report, and these are differences of medians rather than paired
seed-level effects.

\begin{figure}[t]
\centering
\includegraphics[width=\columnwidth]{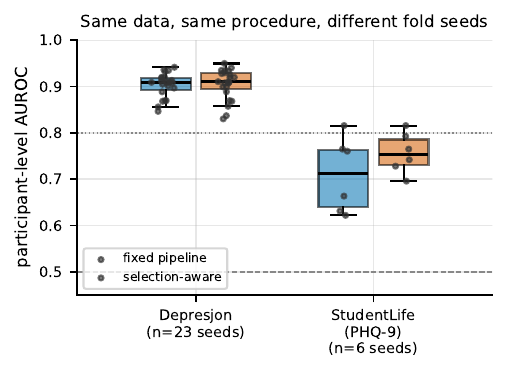}
\caption{Participant-level AUROC across independent grouped fold seeds with the
procedure otherwise unchanged. The dotted line marks 0.80. Depresjon is stable;
StudentLife spans roughly 0.62 to 0.82, so a single reported partition
underdetermines the conclusion. Seed counts differ because each StudentLife seed
is substantially more expensive.}
\label{fig:stability}
\end{figure}

\subsection{Is the pooled statistic an artefact of cross-fold scales?}\label{sec:robust}
Pooling out-of-fold probabilities assumes they are comparable across folds, which
is not automatic when different folds select different model families. We
therefore replace each participant's score with its quantile rank \emph{within its
own outer fold}, making the statistic invariant to any strictly monotone per-fold
rescaling, and recompute the permutation null under that definition rather than
re-scoring the existing one. On Depresjon the rank-pooled statistic is 0.928,
matching the pooled value to three decimals, with $p = 0.0090$ ($B = 110$, no
exceedances, null mean 0.511): the primary conclusion does not depend on
cross-fold score comparability.

\subsection{Do confounders explain the actigraphy result?}\label{sec:confound}
Two features of the Depresjon cohort could produce an association without any
behavioural signal. Five of 23 condition participants were inpatients, so ward
routine rather than depression could drive their activity; and condition
participants recorded fewer days than controls (17.1 versus 22.7), so anything
tracking recording length could carry label information.

A null computed on the full cohort does not transfer to a restricted one, so each
restriction is rerun end to end with its own permutation null under the identical
restricted analysis. Excluding hospitalised participants leaves 50 participants
(18 positive) and gives AUROC 0.927 with $p = 0.0089$ ($B = 112$, no exceedances,
null mean 0.498). The association persists after excluding hospitalised
participants, although this sensitivity analysis does not eliminate other measured
or unmeasured confounding. Matching on recording length gives 0.890 as a point
estimate; because matching creates strata, its null would have to be permuted
within matched blocks, and that block-restricted null is not complete, so we
report the estimate descriptively rather than reuse an unmatched null. These
analyses address the two confounders specified in advance.

\subsection{Privacy case study}
An illustrative DP-SGD \emph{training} study under participant add/remove
adjacency (per-participant averaged gradients clipped to $C{=}1$ then noised;
batches of 8, sampling rate 0.25, 240 steps, SGD at lr 0.5, Opacus 1.6.0 R\'enyi
accountant, $\delta = 2.33{\times}10^{-3}$) separates the mechanism: clipping alone
cost 0.009 AUROC, the noise arms 0.12--0.30. Three seeds cannot support a
privacy-utility curve, and the budget covers training only, since threshold
selection and calibration read the same private data. Full results are in the
appendix.

\section{Discussion}
Longitudinal sensing studies can be data-rich and participant-poor at the same
time, and inference has to respect the participant as the independent unit. Small
cohorts are not unusable; their estimates must be calibrated against the analysis
that produced them. The simulation makes the cost of getting this wrong
concrete: a window-level analysis of data containing no effect at all rejects
three times out of four. This mechanism could contribute to the patterns of
overoptimistic performance reported in small-sample clinical machine learning
\citep{berisha2022overoptimistic}. Participant-level splitting removes overlap, and the
permutation distribution quantifies how much apparent discrimination arises from
labelling, partitioning, tuning, calibration, and selection acting together. The
two cohorts separate: the actigraphy estimate varied little across the evaluated
fold schedules and rejects the exchangeability null, whereas the smartphone
estimate varies roughly twice as widely and does not. AUROC 0.636 with sensitivity
0.143, recovering one of seven positives, shows why a single discrimination number
is insufficient at low prevalence.

Our test complements the identity-confounding test of
\citet{chaibubneto2019identity} rather than replacing it: theirs asks whether a
record-wise design absorbed subject identity, ours whether a subject-wise result
survives a null containing the analyst's choices.

Limitations bound every claim above. Permutation counts are below the 999 that would be standard for a headline
inference ($B{=}201$ for Depresjon, $B{=}57$ for StudentLife, $B{=}23$ for the
paired selection contrast). For both datasets the permutation counts are large enough that a 5\% rejection is
attainable in principle (smallest attainable $p$ of 0.0050 and 0.017); only
Depresjon actually rejects. The conclusions are therefore not artefacts of Monte
Carlo resolution, though every interval would tighten with more permutations. A matched permutation null is complete for the inpatient-excluded analysis; the
recording-length-matched analysis remains descriptive because its block-restricted
null is not yet complete, and a null computed on the full cohort does not transfer
to a restricted one. Partition stability rests on 23 and 6 seeds, which supports a qualitative
comparison rather than a distributional claim.
Both datasets are single-site and observational, neither supports powered subgroup
analysis, and the archives were obtained through the project rather than from the
official hosts, so provenance is recorded by hash but not independently verified.

\section{Conclusion}
The problem is not exotic. Record a few dozen people for weeks, cut the recordings
into windows, and the dataset's apparent size exceeds its information content by an
order of magnitude. Splitting on participants handles the crudest version. What it
does not handle is that the analyst then tunes, calibrates, constructs
label-dependent folds, selects, and reports. Each of these label-dependent steps
contributes to the distribution of the reported statistic under the null, yet that
contribution is often absent from standard evaluation.

Our answer is to make the whole procedure the test statistic. In simulation the
resulting test holds Type-I error at 0.025--0.100 where a window-level analysis
rejects 70--80\% of the time under the null, and at $n = 80$ it matches a
participant bootstrap's power to within 0.025. A second simulation shows what
happens when one step is left outside: freezing the selected pipeline rather than
repeating the search inflates Type-I error to 0.240 with eight candidates. On real
data the actigraphy cohort rejects the exchangeability null ($p = 0.0050$),
unchanged under a scale-robust rank-pooled statistic, while the smartphone cohort
does not ($p = 0.1724$) despite having the resolution to.

Two points matter for adoption. For an analysis-matched permutation test, any
label-dependent step used to produce the observed statistic must be reproduced
under permutation: our folds were grouped-stratified,
and holding them fixed biased the null mean to 0.471 instead of 0.490 --- inflating
significance in precisely the direction this paper argues against, and invisible
until we asked why the null was not centred. And a shift in the null mean is not
by itself evidence of optimism, since inner selection within outer training folds
is proper nested evaluation; the quantity that governs a decision is the upper
tail, which is why Table~\ref{tab:selnull} reports Type-I error rather than means
alone. What we can claim is narrow: under this prespecified participant-level
design, one cohort rejects a null matched to the analysis and one does not.

\section{Limitations and Ethical Considerations}
\textbf{Statistical limits.} The permutation test calibrates a \emph{prespecified}
analysis against participant-label exchangeability. It does not repair choices
made while developing that analysis after seeing earlier results; only
pre-registration or a held-out cohort addresses that. A rejection supports an
association under the stated design and does not establish causality,
transportability, fairness, or clinical utility. A non-rejection is not evidence
of no association: it means the observed statistic is not unusually large relative
to the matched null at the available resolution. The test cannot eliminate
unmeasured confounding, and the restricted analyses address only the two confounders specified in
advance. A matched permutation null is complete for the inpatient-excluded
analysis; the recording-length-matched analysis is reported descriptively because
its block-restricted null is not yet complete.

\textbf{Resolution and scope.} Permutation counts are $B = 201$ (Depresjon
fixed-pipeline), 112 (inpatient-excluded), 110 (rank-pooled), 57 (StudentLife),
and 23 for the paired selection-aware contrast, which we report as exploratory
throughout. Simulation cells use 25--60 replicates, so we describe the ordering of
procedures rather than exact rejection rates. Partition stability rests on 23 and
6 seeds respectively, supporting a qualitative comparison rather than a
distributional claim. Both cohorts are single-site and observational, and neither
supports adequately powered subgroup analysis.

\textbf{Data provenance and consent.} Both datasets are public research releases
collected under their originating institutions' approvals; we conducted no new
human-subject data collection and made no attempt at re-identification.
Participant identifiers are salted keyed hashes and the salt is not distributed,
so released artifacts cannot be linked to source participant numbering. The
archives analysed here were obtained through the project rather than downloaded
from the official hosts during this build; file hashes are recorded, but
provenance against the official releases is not independently verified, and users
should confirm the licence terms before redistribution. Derived feature tables are
not redistributed.

\textbf{Potential misuse.} Participant-level risk estimates from behavioural
sensing could be repurposed for surveillance or for consequential decisions in
employment, education, or insurance. We restrict our claims to research
evaluation and recommend no deployment for diagnosis or triage. A more
conservative evaluation standard is itself an ethical contribution: presenting an
unstable model as clinically useful, on the basis of a few dozen participants,
carries real cost for the people such a model would be applied to.

\section{Generative AI Usage}
Generative AI assistance (Anthropic Claude) was used as coding assistant in preparing this work; every reported number is produced by
that code from the recorded artifacts rather than transcribed by hand, and the
Code, configuration files, seed schedules, per-permutation outputs, and dataset acquisition instructions will be made available upon reasonable request.

The author take full responsibility for the content. All statistical claims,
experimental designs, and interpretations were specified and verified by the
authors, and every citation was checked against its source. No text, citation,
result, or dataset detail was accepted without verification. 

{\small
\setlength{\bibsep}{0pt}
\bibliographystyle{ACM-Reference-Format}
\bibliography{references}
}

\end{document}